\documentclass[conference]{IEEEtran}
 
\usepackage{aliascnt}
\usepackage{cite}
\usepackage{wrapfig}
\usepackage{gensymb}

\IEEEoverridecommandlockouts
\newenvironment{talign}
 {\align}
 {\endalign}
\usepackage{multirow}
\usepackage{enumitem}
\usepackage{pifont}
\setlength{\abovedisplayskip}{0pt plus 0pt minus 0pt}
\setlength{\belowdisplayskip}{0pt plus 0pt minus 0pt}
\setlength\abovedisplayshortskip{0pt plus 0pt minus 0pt}
\setlength\belowdisplayshortskip{0pt plus 0pt minus 0pt}
\usepackage{amsmath,amssymb,amsfonts,amsthm}
\usepackage{algorithmic}
\usepackage[ruled,linesnumbered]{algorithm2e}
\usepackage{graphicx}
\usepackage{textcomp}
\usepackage{caption}
\usepackage[hidelinks]{hyperref}
\usepackage{float}
\usepackage[utf8]{inputenc}
\usepackage{tabularx}
\usepackage{fancyhdr}
\usepackage{booktabs}
\usepackage{cases}
\usepackage{subfigure}
\usepackage{bm}
\def\BibTeX{{\rm B\kern-.05em{\sc i\kern-.025em b}\kern-.08em
    T\kern-.1667em\lower.7ex\hbox{E}\kern-.125emX}}

\usepackage{graphicx}
\usepackage{mathrsfs}

\usepackage{bm}

\DeclareRobustCommand{\vect}[1]{\bm{#1}}
\pdfstringdefDisableCommands{%
  \renewcommand{\vect}[1]{#1}%
}

\newtheoremstyle{slanted}
{0em plus 0em minus 0em}
  {0em plus 0em minus 0em}
  {\em}
  {}
  {\bfseries}
  {.}
  { }
  {}

\theoremstyle{slanted}

\theoremstyle{slanted}

\theoremstyle{slanted}
\newtheorem{theorem}{Theorem}

\theoremstyle{slanted}

\theoremstyle{slanted}

\theoremstyle{slanted}

\theoremstyle{slanted}

\usepackage{enumitem}

\begin{document}

\title{\fontsize{22}{26}\selectfont User Association and Resource Allocation in Large Language Model Based Mobile Edge Computing System over 6G Wireless Communications}

\author{\IEEEauthorblockN{Liangxin Qian and Jun Zhao\\}
\IEEEauthorblockA{{School of Computer Science and Engineering} \\
{Nanyang Technological University}\\
Singapore\\
qian0080@e.ntu.edu.sg, junzhao@ntu.edu.sg\vspace{-25pt}}

}

\maketitle
\thispagestyle{fancy}
\pagestyle{fancy}
\lhead{This paper appears in the 2024 IEEE 99th Vehicular Technology Conference (VTC).}
\cfoot{\thepage}
\renewcommand{\headrulewidth}{0.4pt}
\renewcommand{\footrulewidth}{0pt}

\begin{abstract}
In the rapidly evolving landscape of large language models (LLMs) and mobile edge computing for 6G, the need for efficient service delivery to mobile users with constrained computational resources has become paramount. Addressing this, our paper delves into a collaborative framework for model training where user data and model adapters are shared with servers to optimize performance. Within this framework, users initially update the first several layers of the adapters while freezing the other layers of them, leveraging their local datasets. Once this step is complete, these partially trained parameters are transmitted to servers. The servers, equipped with more robust computational capabilities, then update the subsequent layers. After this training, they send the enhanced parameters back to the users. This collaborative training approach ensures that mobile users with limited computational capacities can still benefit from advanced LLM services without being burdened by exhaustive computations. Central to our methodology is the DASHF algorithm, which encapsulates the \underline{D}inkelbach algorithm, \underline{a}lternating optimization, \underline{s}emidefinite relaxation (SDR), the \underline{H}ungarian method, and a pioneering \underline{f}ractional programming technique from a recent IEEE JSAC paper~\cite{zhao2023human}. The crux of DASHF is its capability to reformulate an optimization problem as Quadratically Constrained Quadratic Programming (QCQP) via meticulously crafted transformations, making it solvable by SDR and the Hungarian algorithm. Through extensive simulations, we demonstrate the effectiveness of the DASHF algorithm, offering significant insights for the advancement of collaborative LLM service deployments.
\end{abstract}

\begin{IEEEkeywords}
6G, Large language model, mobile edge computing, wireless communications, resource allocation.
\end{IEEEkeywords}

\section{Introduction}
The proliferation of large language models (LLMs) marks a monumental leap in the realms of artificial intelligence and natural language processing. These models, with their deep structures and vast parameter sizes, offer capabilities that redefine the benchmarks of machine-human interactions for 6G~\cite{gao2023llama}. However, the very nature of their size and intricacy means they cannot be effortlessly deployed, especially in constrained environments like mobile devices~\cite{zhang2023llama}.

Mobile edge computing (MEC) environments, designed to bring computation closer to the data source, seem like a perfect fit for deploying LLMs. Still, they present their own set of challenges. Mobile devices are constrained by computational resources and battery life, making it strenuous to run these heavyweight LLMs efficiently~\cite{dong2023lambo}. Additionally, the unpredictability of wireless communication, with its fluctuating data rates and potential for high latency, complicates the seamless integration of LLMs~\cite{shen2023large}.
\subsection{Related Work}
In this section, we briefly introduce some recent research on the combination of LLM and MEC.

Recent advancements in LLMs and foundation models have markedly propelled natural language processing and artificial intelligence (AI) tasks, with MEC playing a crucial role in reducing latency and boosting efficiency. Among these, the computing and network convergence (CNC) brain \cite{hong2023intelligence} stands out, employing AI to effectively match resources in computing and network convergence environments, showcasing a trend towards more autonomous, self-optimizing networks suitable for Internet of things (IoT) and advanced smart applications.

Simultaneously, in the domain of federated learning, the introduction of PromptFL \cite{guo2023promptfl} represents a significant leap. This method revolutionizes federated learning by adopting prompt training in environments with limited resources, thereby improving processing efficiency and ensuring better resource distribution in distributed computing setups. PromptFL's emergence underlines the critical need for AI systems that are both adaptable and scalable, especially in dynamic network environments.

Additionally, the fusion of Generative AI with mobile edge networks in the sphere of 6G communications, as discussed in \cite{lai2023resource}, opens new avenues for enhancing network intelligence and efficiency, especially in scenarios with limited resources. This integration hints at future developments in generative mobile edge networks, which could lead to more robust, intelligent, and self-learning network systems, significantly boosting user experiences and operational efficiency in a variety of sectors.
\subsection{Challenges and motivations} To meet the growing demand for on-the-fly LLM services, there's a pressing need to address these issues. This involves optimizing the LLMs for constrained devices and innovating on the wireless communication front. A potential solution lies in a collaborative approach: a synergy where local computations on mobile devices are harmoniously complemented by offloading specific, intensive tasks to more capable servers. Such a paradigm can make the promise of LLMs in MEC environments a tangible reality. 

At present, there are few or no papers on user connection and communication computing resource allocation using MEC for assisting network training in LLM scenarios.

\subsection{Studied problem}
In this paper, we explore the LLM-driven MEC system and introduce the novel concept of the user service-cost ratio (SCR), represented as $\frac{\text{service score}}{\text{cost}}$. This metric eloquently captures the balance between user service scores and the crucial factors of delay and energy consumption within mobile computing environments. The user service score amalgamates a user's wireless and computational resources as perceived by the server. Cost consumption embodies the cumulative delay and energy expenditure of both users and servers. Given the computational challenges posed by LLMs, we hypothesize that users begin by training the initial layers of the adapters with their local data. Once this preliminary training is completed, users send these trained parameters to the servers. Servers, equipped with more computational resources, take over from this point, training the subsequent layers of the adapters. Specifically, servers then assign both wireless and computational resources to each user. This includes bandwidth, user and server transmission power, and the GPU computing resources of users and servers. Upon completion, these refined parameters are then relayed back to the users by servers.
\subsection{Main contributions}
To the best of our knowledge, our paper is the first to explore user association and resource allocation in the LLM wireless communication scenario. Our contributions include a novel joint optimization problem, the concept of SCR, and a novel alternating optimization algorithm as follows:

\noindent $\bullet$ Joint Optimization of User-Sercer Adapter Parameter Training Ratio and User Association: We propose a joint optimization problem that optimizes user adapter training offloading and user association for tailored LLM service to users.

\noindent $\bullet$ Introduction of the User Service-Cost Ratio: The concept of the user service-cost ratio (SCR) is introduced. SCR quantifies the balance between user service scores and the overall delay and energy consumption in the entire uplink and downlink communication. It provides a valuable metric for assessing the trade-off between the user's obtained service resources and resource efficiency.

\noindent $\bullet$ Innovative Alternating Optimization Approach: We propose an innovative Alternating Optimization (AO) approach called DASHF, which represents the combination of the \underline{D}inkelbach algorithm, \underline{a}lternating optimization, \underline{s}emidefinite relaxation (SDR), the \underline{H}ungarian algorithm, and a novel \underline{f}ractional programming (FP) technique by~\cite{zhao2023human}  published in IEEE JSAC recently. The most challenging part of DASHF is to rewrite an optimization problem as Quadratically Constrained Quadratic Programming (QCQP) via carefully constructed transformations, in order to leverage SDR and the Hungarian algorithm to obtain a solution. Initially, it addresses the optimization of user connections and adapter parameter training ratios as a single QCQP problem. Subsequently, it delves into the optimization of communication-computation resource allocation (for bandwidth, transmit power of users and servers, and computing frequency of users and servers), providing an effective solution for the non-convex FP problem.

\noindent $\bullet$ The simulation results substantiate the effectiveness of the proposed DASHF algorithm in achieving the joint optimization of user adapter parameter offloading, resource allocation, service-cost ratio, and resource allocation, demonstrating its practical applicability and benefits.

The rest of this paper is organized as follows. The system model and optimization problem formulation are presented in Section \ref{section.System Model}. We propose a novel DASHF algorithm to solve the optimization problem in Section \ref{section.proposed AO technique}. The complexity of the proposed DASHF algorithm is analyzed in Section \ref{section.Complexity Analysis}. The numerical results are provided in Section \ref{section.Numerical Results}. We conclude this paper in Section \ref{section.Conclusion}.

\section{LLM-Empowered MEC System and Optimization Problem Formulation}\label{section.System Model}
In this section, we first introduce the system scenario, then analyze the delay and energy consumption in the system model, then introduce the concept of user service-cost ratio (SCR), and formulate the optimization problem.
\begin{figure}[t]
\centering
\includegraphics[width=0.45\textwidth]{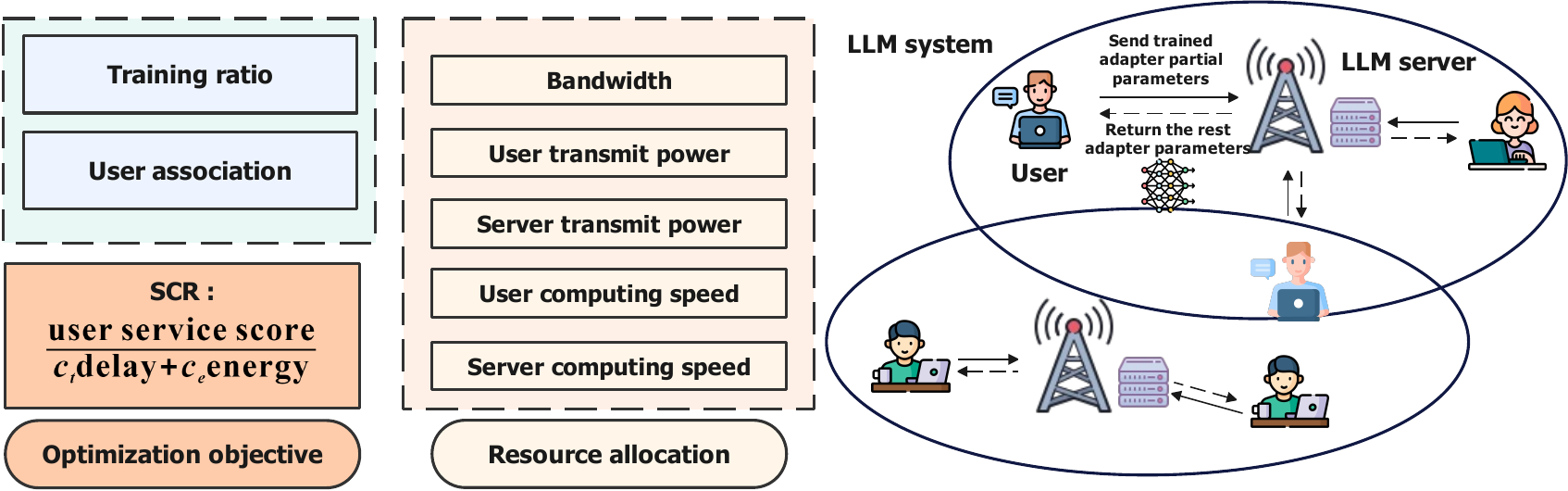}
\caption{\small{Optimizing the SCR of an LLM system with $N$ users and $M$ servers through joint optimization of user association, offloading, and resource allocation.}}\vspace{-15pt}
\label{Fig.system}
\end{figure}
\subsection{System scenario}
As presented in Fig. \ref{Fig.system}, the LLM-based mobile edge computing system contains multiple servers that distribute tailored LLM models or emulators to various mobile users. Given the computational constraints of some mobile users, a hybrid approach is adopted: users train the first few layers locally with their datasets while freezing the other layers. After training, users send these layer parameters of their adapters to the server. Once the server receives those layer parameters, it completes the training of the remaining layers while freezing the layers trained by the users. After training, the server sends the refined parameters back to the users. This collaborative mechanism ensures efficient and personalized LLM services, compensating for individual users' computational limitations.
\subsection{System model}
We consider a system comprising $N$ mobile users and $M$ LLM servers. We use $n$ and $m$ as indices for the $n$-th VR user and the $m$-th LLM server, respectively, where $n \in \mathcal{N} := \{1,2,\cdots,N\}$ and $m \in \mathcal{M} := \{1,2,\cdots,M\}$. Each user is connected to one and only one server; i.e., $\sum_{m\in \mathcal{M}} x_{n,m} = 1$. We introduce indicator variables $x_{n,m}\in\{0,1\}$ to characterize the connection between users and servers; specifically, $x_{n,m}=1$ (resp, $0$) means that the \mbox{$n$-th} user is connected (resp., not connected) to the \mbox{$m$-th} server. For example, if $x_{n,m}=1$, it means that the \mbox{$n$-th} user only connects to the \mbox{$m$-th} server and $x_{n,m'}=0$ for $m{'}\in\mathcal{M} \setminus \{m\}$.
\subsubsection{Time consumption}
We consider frequency-division multiple access (FDMA) so that communication among users and servers would not interfere. The transmission rate from user $n$ to the chosen edge server $m$ is $ r_{n,m}(b_{n,m}, p_{n}) = b_{n,m} \log_2(1+\frac{g_{n,m}p_{n}}{\sigma^2b_{n,m}})$, where $\sigma^2$ is the noise power spectral density, $b_{n,m}$ is the allocated bandwidth between user $n$ and server $m$, $p_n$ is the transmit power of user $n$, $g_{n,m}$ is the channel gain and can be further expressed by $g_{n,m} = h_{n,m}l_{n,m}$, where $h_{n,m}$ is the large-scale slow-fading component (path loss and shadowing) and $l_{n,m}$ is the small-scale Rayleigh fading.

In Parameter-Efficient Fine-Tuning (PEFT) strategies for large language models, the concept is to introduce ``adapters'' - smaller neural network components. These are placed within the model, enabling task-specific customization while largely keeping the pre-trained parameters unchanged. By doing this, there's no need for the exhaustive retraining of the complete model. If we consider inserting an adapter between two layers with dimensions $d_{in}$ and $d_{out}$, the design usually involves: 1. A down-projection from $d_{in}$ to a reduced dimension $d_{adapt}$. 2. An up-projection from $d_{adapt}$ back to $d_{out}$. Accounting for weights and biases in these transformations, the total size of the adapter's parameters, $d$, can be captured as:
$d = d_{in}\times d_{adapt} + d_{adapt}\times d_{out} + d_{adapt} + d_{out}$. For any given user, represented by $n$, the parameter size of the adapter, $d_n$, can differ. This could be due to user-specific requirements or constraints. Hence, $d_n$ links the inherent complexity of the LLM, the architecture of the adapter, and the unique data requirements of user $n$.

\textbf{User training and sending adapter parameter phases.} Based on the above discussion, assume the total adapter parameter size at user $n$ is $d_n$. The adapter parameter size trained by user $n$ is $\varphi_n d_n$, $\varphi_n \in [0,1]$. User $n$ trains the first $\varphi_n d_n$ layer parameters with the local datasets. The training time consumed is $T^{(t_1)}_{n,m} = \frac{t_n \varphi_n d_n e_n}{g_n F_n}$. $t_n$ is the floating point operations (FLOPs) of all tokens for each adapter parameter of the user $n$. $e_n$ is the number of local training epochs. $g_n$ is the available GPU number of user $n$. $F_n$ is the available GPU computation speed of user $n$, whose unit is FLOP. After local training and the user-server connection algorithm (this can be complicated by choosing the nearest neighbor server sets, then choosing the server with the lowest transmission time, and finally finishing all user-server connections), user $n$ transmits $\varphi_n d_n$ adapter parameters to the server $m$. The transmission time from the user $n$ to the server $m$ is $T^{(t_2)}_{n,m} = \frac{x_{n,m}\varphi_n d_n \omega_b}{r_{n,m}}$, where $\omega_b$ is the bits number used to represent each parameter. For example, if we use the ``float32'' floating-point number, $\omega_b$ will be 32.

\textbf{Server training and returning adapter parameter phases.} After receiving the partial adapter parameters $\varphi_n d_n$ from the user $n$, the server $m$ trains the remaining part of the adapter with the shared user datasets and the training delay is $T^{(t_3)}_{n,m} = \frac{t_n (1-\varphi_n) x_{n,m} d_n e_m}{g_m F_m}$. $e_m$ is the number of server training epochs. $g_m$ is the available GPU number of server $m$. $F_m$ is the available GPU computation speed of server $m$. Then, Server $m$ transmits the results to the user $n$, and the delay is $T^{(t_4)}_{n,m} = \frac{x_{n,m}(1-\varphi_n) d_n \omega_b}{r_{m,n}}$, where $r_{m,n}=b_{n,m} \log_2(1+\frac{g_{n,m}p_{m}}{\sigma^2b_{n,m}})$. We assume the path loss and bandwidth between the downlink and uplink are the same. The time consumed on the server side is $T_{s,n,m} = T^{(t_2)}_{n,m} + T^{(t_3)}_{n,m}$. The time consumed on the user side is $T_{u,n,m} =  T^{(t_1)}_{n,m} + T^{(t_4)}_{n}$. Therefore, the total delay will be $T_{total} = \text{max}\{T_{s,n,m} + T_{u,n,m}\}$.
\subsubsection{Energy consumption}
Based on the delay discussion, we then compute the energy consumption in this system. Energy used for user $n$ training the adapter locally can be calculated by $E^{(t_1)}_{n,m} = e_n\kappa_n \varphi_n d_n t_n F^{2}_n$ \cite{zeng2021energy}. $\kappa_n$ is the computational efficiency of user $n$'s GPUs, denoting the power growth rate corresponding to rising computing speeds. Energy used for transmitting data from the \mbox{$n$-th} user to the server $m$ is given as $ E^{(t_2)}_{n,m} = p_n T^{(t_2)}_{n,m} = p_n \frac{x_{n,m}\varphi_n d_n\omega_b}{r_{n,m}}$.  Energy for server training $\varphi_n d_n$ adapter parameters is given as $E^{(t_3)}_{n,m} = e_m\kappa_m x_{n,m} (1-\varphi_n) d_n t_n F_{m}^{2}$. $\kappa_m$ is the computational efficiency of server $m$'s GPUs. Energy caused by \mbox{$m$-th} server transmitting trained adapter parameters to user $n$ is $E^{(t_4)}_{n,m} = p_m T^{(t_4)}_{n,m} = p_m \frac{x_{n,m} (1-\varphi_n) d_n\omega_b}{r_{m,n}}$. Thus, the total energy consumption can be formulated as $E_{total}  = \sum_{n \in \mathcal{N},m\in \mathcal{M}} (E_{u,n,m} + E_{s,n,m})$, where $E_{u,n,m} = E^{(t_1)}_{n,m} + E^{(t_2)}_{n,m}$ and $E_{s,n,m} = E^{(t_3)}_{n,m} + E^{(t_4)}_{n,m}$.
\subsubsection{User service score}
We denote the service score of user $n$ that is connected to server $m$ as:
\begin{talign}
    v_{n,m} = \varpi_1 \ln[1+\varpi_2 (\frac{p_{m}}{p_{max}^{(m)}} + \frac{F_{n,m}}{F_{max}^{(m)}} + \frac{b_{n,m}}{b_{max}})],\label{trustvalue}
\end{talign}
where $\varpi_1$ determines the range of function value, $\varpi_2$ is used for normalization of $(\frac{p_m}{p_{max}^{(m)}} + \frac{F_{n,m}}{F_{max}^{(m)}} + \frac{b_{n,m}}{b_{max}})$. This function is jointly concave of $p_m$, $F_{n,m}$, and $b_{n,m}$ \cite{yang2015incentive}. This user service score function is effective and sensitive in all value ranges of $(\frac{p_m}{p_{max}^{(m)}} + \frac{F_{n,m}}{F_{max}^{(m)}} + \frac{b_{n,m}}{b_{max}})$, which can describe each user's resources obtained from the server. 
\subsection{Optimization Problem}
We define user connection $\bm{x} := (x_{n,m})|_{n \in \mathcal{N}, m \in \mathcal{M} }$, offloading ratio $\bm{\varphi} := (\varphi_n)|_{n \in \mathcal{N}}$, bandwidth $\bm{b} := (b_{n,m})|_{n \in \mathcal{N}, m \in \mathcal{M} }$, transmission power $\bm{p_u} := (p_n)|_{n \in \mathcal{N}}$, $\bm{p_s} := (p_m)|_{m \in \mathcal{M} }$, GPU computation speed  $\bm{f_u} := (F_n)|_{n \in \mathcal{N}}$ ,and $\bm{f_s} := (F_{m})|_{m \in \mathcal{M} }$. Our goal is to maximize the user service-cost ratio (SCR):
\begin{talign}
    &\frac{\mathcal{V}}{\omega_t T_{total} + \omega_e E_{total}}
     = \nonumber \\ &\frac{\sum_{n \in \mathcal{N}, m \in \mathcal{M}} x_{n,m} v_{n,m}}{\omega_t (T_{s,n,m} + T_{u,n,m}) + \omega_e (\hspace{-14pt}\sum\limits_{n \in \mathcal{N},m\in \mathcal{M}}\hspace{-14pt} (E^{(t_1)}_{n,m}+E^{(t_2)}_{n,m}+E^{(t_3)}_{n,m}+E^{(t_4)}_{n,m}))},
\end{talign}
where $\omega_t$ and $\omega_e$ represent the weight values of delay and energy, respectively. In order to linearize the ``maximize" term of $T_{total}$, we add an auxiliary variable $T$,  which is constrained to be greater than or equal to $T_{total}$. Besides, we utilize Dinkelbach's Algorithm~\cite{dinkelbach1967nonlinear} by adding an additional variable $y$, which is obtained from the SCR value in the previous iteration. Then, the fractional programming in the trust-cost ratio is transformed into the following problem:
\begin{subequations}\label{prob1}
\begin{align}
&\max\limits_{\bm{x},\bm{\varphi},\bm{b},\bm{p_u},\bm{p_s},\bm{f_u},\bm{f_s},T}  \nonumber \\& \hspace{-20pt} \big\{ \!\!\!\!\sum\limits_{n \in \mathcal{N}, m \in \mathcal{M}} \!\!\!\![x_{n,m} v_{n,m}  \hspace{-2pt}-\hspace{-2pt}y \omega_e (E_{u,n,m} \!\!+\!\! E_{s,n,m})]\big\}\hspace{-2pt}-\hspace{-2pt} y \omega_t T\tag{\ref{prob1}}\\
\text{s.t.} \quad & x_{n,m} \in \{0,1\}, \forall n,m \label{x_constr1}\\       & \sum\limits_{m\in \mathcal{M}} x_{n,m} = 1, \forall n \label{x_constr2}\\
         & \varphi_n \in [0,1], \forall n \label{varphi_constr}\\
         & \sum\limits_{n\in \mathcal{N}} x_{n,m} b_{n,m} \leq b_{max},\forall m \label{x_b_constr}\\
         & p_n \leq p^{(n)}_{max}, \forall n \label{pu_constr}\\
         & \sum\limits_{n\in \mathcal{N}} x_{n,m} p_{n,m} \leq p^{(m)}_{max}, \forall m \label{x_ps_constr}\\
         & F_n \leq F^{(n)}_{max}, \forall n \label{fu_constr}\\
         & \sum\limits_{n\in \mathcal{N}} x_{n,m} F_{m} \leq F^{(m)}_{max}, \forall m \label{x_fs_constr}\\
         & b_{n,m} \!\geq\! 0, p_n \!\geq \!0, p_{n,m} \!\geq\! 0, F_n \!\geq\! 0, F_{m} \!\geq \!0, \forall n,m \label{nonnegative_constr}\\
         & T_{s,n,m} + T_{u,n,m} \leq T, \forall n,m\label{T_constr}
\end{align}
\end{subequations}
Based on Dinkelbach's Algorithm, we iteratively optimize $y$ and problem (\ref{prob1}). Specifically, at the $i$-th iteration, given $y^{(i-1)}$, we first obtain $\bm{x}^{(i)},\bm{\varphi}^{(i)},\bm{b}^{(i)},\bm{p_u}^{(i)},\bm{p_s}^{(i)},\bm{f_u}^{(i)},\bm{f_s}^{(i)}, T^{(i)}$ by solving the optimization problem \ref{prob1}; then we calculate $y^{(i)}$ with the given $\bm{x}^{(i)},\bm{\varphi}^{(i)},\bm{b}^{(i)},\bm{p_u}^{(i)},\bm{p_s}^{(i)},\bm{f_u}^{(i)},\bm{f_s}^{(i)}, T^{(i)}$. Repeat the above operations until the solutions converge. In the following section, we consider using the alternating optimization method (AO) to tackle the complex problem (\ref{prob1}). 

\textbf{Roadmap of the whole algorithm.} First, we decompose the outer fractional structure of the original SCR problem using Dinkelbach algorithm and sequentially optimize $\bm{x},\bm{\varphi}$ and $\bm{b},\bm{p_u},\bm{p_s},\bm{f_u},\bm{f_s}$ using the AO method. In the first step of AO, we fix $\bm{b},\bm{p_u},\bm{p_s},\bm{f_u},\bm{f_s}$ and optimize $\bm{x},\bm{\varphi},T$. We transform the optimization problem in the first step of AO into a Quadratically Constrained Quadratic Program (QCQP) and solve it using Semidefinite Relaxation (SDR) and the Hungarian algorithm. In the second step of AO, we fix $\bm{x},\bm{\varphi}$ and optimize $\bm{b},\bm{p_u},\bm{p_s},\bm{f_u},\bm{f_s}, T$. During the optimization in the second step of AO, we propose a new fractional programming method to transform this non-convex problem into a convex one. Finally, we calculate $y$ based on the obtained solutions and repeat the aforementioned process until $y$ converges. In this algorithm, since we utilize \underline{\textbf{D}}inkelbach's algorithm, \underline{\textbf{a}}lternating optimization, \underline{\textbf{s}}emidefinite relaxation, \underline{\textbf{H}}ungarian algorithm, and \underline{\textbf{f}}ractional programming, we refer to this algorithm as the \textbf{DASHF Algorithm}.

\section{Our proposed DASHF Algorithm to solve the optimization problem}\label{section.proposed AO technique}
Assuming that $y$ is given, we need to optimize $\bm{x},\bm{\varphi},\bm{b},\bm{p_u},\bm{p_s},\bm{f_u},\bm{f_s}, T$. In the outermost loops, we iteratively optimize $y$; In the innermost loops, we iteratively optimize $\bm{x},\bm{\varphi},\bm{b},\bm{p_u},\bm{p_s},\bm{f_u},\bm{f_s}, T$. However, it is still difficult to optimize them in parallel. Thus, we consider operating two inner AO steps to solve it. At the $i$-th iteration, 
\begin{enumerate}
    \item Optimize $\vect{x}$, $\vect{\varphi}$, $T$, given $\vect{b}$, $\vect{p_u}$, $\vect{p_s}$, $\vect{f_u}$, $\vect{f_s}$. Assuming that $\vect{b}^{(i-1)}$, $\vect{p_u}^{(i-1)}, $$\vect{p_s}^{(i-1)}$, $\vect{f_u}^{(i-1)}$, $\vect{f_s}^{(i-1)}$, $y^{(i-1)}$ are given, we optimize $\vect{x}^{(i)}$, $\vect{\varphi}^{(i)}$, $T^{(i)}$.
    \item Optimize $\vect{b}$,$\vect{p_u}$,$\vect{p_s}$,$\vect{f_u}$,$\vect{f_s}$,$T$, given $\vect{x}$, $\vect{\varphi}$. Assuming that $\vect{x}^{(i-1)}$, $\vect{\varphi}^{(i-1)}$,$y^{(i-1)}$ are given, we optimize $\vect{b}^{(i)}$,$\vect{p_u}^{(i)}$,$\vect{p_s}^{(i)}$,$\vect{f_u}^{(i)}$,$\vect{f_s}^{(i)}$,$T^{(i)}$.
\end{enumerate}

\subsection{AO Part 1: Optimize \texorpdfstring{$\vect{x}, \vect{\varphi},T$}{}, given \texorpdfstring{$\vect{b},\vect{p_u},\vect{p_s},\vect{f_u},\vect{f_s}$}{}}\label{section-AO-part1}
Given $\bm{b},\bm{p_u},\bm{p_s},\bm{f_u},\bm{f_s},T$, we optimize $\bm{x},\bm{\varphi},T$. The optimization problem will be:
\begin{subequations}\label{prob2}
\begin{align}
\!\!\!\max\limits_{\bm{x},\bm{\varphi},T}\!\!\!\! & \sum\limits_{n \in \mathcal{N}, m \in \mathcal{M}} \!\!\!\!\!\![x_{n,m} v_{n,m} \!-\! y\omega_e (E_{u,n,m} \!+ \!E_{s,n,m})]-y \omega_t T \tag{\ref{prob2}}\\
\text{s.t.}~ & (\text{\ref{x_constr1}}), (\text{\ref{x_constr2}}), (\text{\ref{varphi_constr}}),(\text{\ref{x_b_constr}}),(\text{\ref{x_ps_constr}}),(\text{\ref{x_fs_constr}}),(\text{\ref{T_constr}}).\nonumber
\end{align}
\end{subequations}
$x_{n,m}$ are binary variables and this is a mixed-integer nonlinear programming problem. We rewrite $x_{n,m}\in \{0,1\}, \forall n,m$ as $x_{n,m}(x_{n,m}-1)=0, \forall n,m$. The optimization problem will be rewritten as:
\begin{subequations}\label{prob3}
\begin{align}
&\!\!\!\!\max\limits_{\bm{x},\bm{\varphi},T}\!\!\!\!\!  \sum\limits_{n \in \mathcal{N}, m \in \mathcal{M}} \!\!\!\!\!\!\!\![x_{n,m} v_{n,m} \!-\! y\omega_e (E_{u,n,m} \!+ \!E_{s,n,m})]-y \omega_t T \tag{\ref{prob3}}\\
&\text{s.t.} \quad  x_{n,m}(x_{n,m}-1)=0, \forall n,m \label{x_constr1_new}\\
           & \quad\quad(\text{\ref{x_constr2}}), (\text{\ref{varphi_constr}}),(\text{\ref{x_b_constr}}),(\text{\ref{x_ps_constr}}),(\text{\ref{x_fs_constr}}),(\text{\ref{T_constr}}).\nonumber
\end{align}
\end{subequations}
We substitute the expression of $E_{u,n,m}+E_{s,n,m}$ into problem (\ref{prob3}) and convert the $\max$ problem in problem (\ref{prob3}) to a $\min$ problem. Besides, let $G_{n,m} \!= \!y \omega_e (\frac{p_n d_n\omega_b}{r_{n,m}} - \frac{p_m d_n\omega_b}{r_{m,n}} - e_m \kappa_m d_n t_n F_m^2)$, $A_{n,m} = y \omega_e(e_n \kappa_n d_n t_n F_n^2)$, and $B_{n,m} = y \omega_e(\frac{p_m d_n\omega_b}{r_{m,n}} + e_m \kappa_m d_n t_n F_m^2)-v_{n,m}$.  Let $\mathbf{A} = [A_{n}]|_{n \in \mathcal{N}}$, $\mathbf{B} = [B_{n,m}]|_{n \in \mathcal{N},m\in \mathcal{M}}$, and $\mathbf{G} = [G_{n,m}]|_{n \in \mathcal{N},m\in \mathcal{M}}$. The optimization problem (\ref{prob3}) can be rewritten as
\begin{subequations}\label{prob5}
\begin{align}
\mathcal{P}_{1}: &\min\limits_{\bm{x},\bm{\varphi},T}  y\omega_t T \!\!+\!\! \sum\limits_{n \in \mathcal{N}} A_{n}\varphi_n +\!\!\!\!\!\sum\limits_{n \in \mathcal{N}, m \in \mathcal{M}}\!\!\!\!\! (B_{n,m} x_{n,m}   + \nonumber \\ &\quad\quad\quad\quad G_{n,m}x_{n,m}\varphi_n) \tag{\ref{prob5}}\\
&\text{s.t.} \quad  (\text{\ref{x_constr1_new}}),(\text{\ref{x_constr2}}), (\text{\ref{varphi_constr}}),(\text{\ref{x_b_constr}}),(\text{\ref{x_ps_constr}}),(\text{\ref{x_fs_constr}}),(\text{\ref{T_constr}}).\nonumber
\end{align}
\end{subequations}
This is a quadratically constrained quadratic program (QCQP) problem. Then, we need to get the standard form of the QCQP problem. Let $\boldsymbol{B}:=(b_{1,1},\cdots,b_{n,m})^\intercal$, $\boldsymbol{P}:=(p_{1,1},\cdots,p_{n,m})^\intercal$, and $\boldsymbol{F}:=(F_{1,1},\cdots,F_{n,m})^\intercal$.
\begin{theorem}
There're matrices $\mathbf{Q}$, $\mathbf{P}_0$, $\mathbf{W}_0$, $\mathbf{P^{(\text{T1})}}$, $\mathbf{P^{(\text{T2})}}$, and $\mathbf{P^{(\text{T3})}}$ that convert Problem $\mathcal{P}_1$ into QCQP Problem $\mathcal{P}_2$:
\begin{subequations}\label{prob6}
\begin{align}
\mathcal{P}_{2}: &\min\limits_{\bm{Q},T}\quad  \boldsymbol{Q}^\intercal\mathbf{P}_0 \boldsymbol{Q}+\mathbf{W}_{0}^\intercal\boldsymbol{Q}+\mathbf{W}_{1}^\intercal\boldsymbol{Q}+y\omega_t T\tag{\ref{prob6}}\\
\hspace{-15pt} \text{s.t.} \quad & \hspace{-7pt} \text{diag}(\boldsymbol{e}_{N+1,NM+N}^\intercal\boldsymbol{Q})(\text{diag}(\boldsymbol{e}_{N+1,NM+N}^\intercal\boldsymbol{Q})-\mathbf{I})=\mathbf{0} \label{x_constr1_qcqp}\\       & \text{diag}(\boldsymbol{e}_{\overline{1},\overline{M}}^\intercal\boldsymbol{e}_{N+1,NM+N}^\intercal\boldsymbol{Q})=\mathbf{I} \label{x_constr2_qcqp}\\
         & \text{diag}(\boldsymbol{e}_{1,N}^\intercal\boldsymbol{Q})\leq\mathbf{I} \label{varphi_constr1_qcqp}\\
         & \text{diag}(\boldsymbol{e}_{1,N}^\intercal\boldsymbol{Q})\geq\mathbf{0} \label{varphi_constr2_qcqp}\\
         & \boldsymbol{B}^\intercal\boldsymbol{e}_{N+1,NM+N}^\intercal\boldsymbol{Q}-B_{max} \leq 0 \label{x_b_constr_qcqp}\\
         & \boldsymbol{P}^\intercal\boldsymbol{e}_{N+1,NM+N}^\intercal\boldsymbol{Q}-P_{max}^{(m)} \leq 0 \label{x_ps_constr_qcqp}\\
         & \boldsymbol{F}^\intercal\boldsymbol{e}_{N+1,NM+N}^\intercal\boldsymbol{Q}-F_{max}^{(m)} \leq 0 \label{x_fs_constr_qcqp}\\
         & \boldsymbol{Q}^\intercal\mathbf{P}^{(\text{T1})} \boldsymbol{Q} + {\mathbf{P}^{(\text{T2})}}^\intercal \boldsymbol{Q} + {\mathbf{P}^{(\text{T3})}}^\intercal \boldsymbol{Q} \leq T, \label{T_constr_qcqp}
\end{align}
\end{subequations}
\end{theorem}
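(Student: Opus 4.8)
The plan is to prove the theorem constructively: I will exhibit a single stacked decision vector together with explicit coefficient matrices, and then verify term-by-term that $\mathcal{P}_1$ and $\mathcal{P}_2$ have the same objective and the same feasible set. Define the lifted variable $\boldsymbol{Q} := (\varphi_1,\dots,\varphi_N,\,x_{1,1},\dots,x_{N,M})^\intercal \in \mathbb{R}^{N+NM}$, keeping $T$ as a separate scalar, consistent with the free-standing $y\omega_t T$ term in $\mathcal{P}_2$. I introduce block selection matrices $\boldsymbol{e}_{1,N}$ and $\boldsymbol{e}_{N+1,NM+N}$, whose transposes extract the $\boldsymbol{\varphi}$-block and the $\boldsymbol{x}$-block of $\boldsymbol{Q}$ respectively, together with an aggregation matrix $\boldsymbol{e}_{\overline{1},\overline{M}}$ that sums each user's $M$ connection indicators into $\sum_{m} x_{n,m}$. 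The operator $\text{diag}(\cdot)$ then turns an extracted sub-vector into the diagonal matrix needed to express the elementwise constraints.

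For the objective I would route each of the three term types in $\mathcal{P}_1$ into a dedicated matrix. The bilinear coupling $\sum_{n,m} G_{n,m}\, x_{n,m}\varphi_n$ is the only quadratic contribution, so I place $G_{n,m}/2$ in the two symmetric off-diagonal positions of $\mathbf{P}_0$ that pair the $\varphi_n$-coordinate with the $x_{n,m}$-coordinate; symmetrizing this way guarantees $\boldsymbol{Q}^\intercal \mathbf{P}_0 \boldsymbol{Q} = \sum_{n,m} G_{n,m}\, x_{n,m}\varphi_n$ while keeping $\mathbf{P}_0$ symmetric, which is exactly what the later SDR step needs. The purely linear pieces split cleanly: $\mathbf{W}_0$ carries the coefficients $A_n$ on the $\varphi$-block so that $\mathbf{W}_0^\intercal \boldsymbol{Q} = \sum_n A_n\varphi_n$, and $\mathbf{W}_1$ carries $B_{n,m}$ on the $x$-block so that $\mathbf{W}_1^\intercal \boldsymbol{Q} = \sum_{n,m} B_{n,m}\, x_{n,m}$, with $y\omega_t T$ copied verbatim.

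The constraints are then matched one at a time. The binary condition $x_{n,m}(x_{n,m}-1)=0$ becomes (\ref{x_constr1_qcqp}) by applying it entrywise to $\text{diag}(\boldsymbol{e}_{N+1,NM+N}^\intercal \boldsymbol{Q})$; the per-user normalization $\sum_m x_{n,m}=1$ becomes (\ref{x_constr2_qcqp}) through the aggregation matrix; and the box constraint $0\le\varphi_n\le1$ splits into (\ref{varphi_constr1_qcqp}) and (\ref{varphi_constr2_qcqp}). Since $\boldsymbol{b},\boldsymbol{p},\boldsymbol{f}$ are fixed in this AO step, the bandwidth, server-power, and server-frequency limits are affine in the $x$-block once $\boldsymbol{B},\boldsymbol{P},\boldsymbol{F}$ are read off as constant coefficient vectors, giving (\ref{x_b_constr_qcqp})--(\ref{x_fs_constr_qcqp}).

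The main obstacle I anticipate is the delay constraint (\ref{T_constr_qcqp}). Expanding $T_{s,n,m}+T_{u,n,m}$ from the four components $T^{(t_1)}_{n,m},\dots,T^{(t_4)}_{n,m}$, I must track that $T^{(t_1)}_{n,m}$ is linear in $\varphi_n$, that $T^{(t_2)}_{n,m}$ is bilinear in $x_{n,m}\varphi_n$, and that $T^{(t_3)}_{n,m}$ and $T^{(t_4)}_{n,m}$ each carry the factor $(1-\varphi_n)x_{n,m}=x_{n,m}-x_{n,m}\varphi_n$ and therefore split into a linear-in-$x$ piece and a bilinear piece. The care lies in routing every bilinear $x_{n,m}\varphi_n$ coefficient (with its fixed, rate-dependent value) into the symmetric off-diagonal entries of $\mathbf{P}^{(\text{T1})}$, and the residual linear coefficients in $\varphi$ and in $x$ into $\mathbf{P}^{(\text{T2})}$ and $\mathbf{P}^{(\text{T3})}$ respectively, so that $\boldsymbol{Q}^\intercal \mathbf{P}^{(\text{T1})}\boldsymbol{Q}+{\mathbf{P}^{(\text{T2})}}^\intercal\boldsymbol{Q}+{\mathbf{P}^{(\text{T3})}}^\intercal\boldsymbol{Q}$ reproduces $T_{s,n,m}+T_{u,n,m}$ and the bound $\le T$ is preserved. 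This bookkeeping, repeated across the $(n,m)$ family of delay constraints, is the most error-prone step; once it is verified, $\mathcal{P}_1$ and $\mathcal{P}_2$ share the same objective and feasible set, completing the reduction to standard QCQP form.
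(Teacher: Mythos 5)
Your construction is correct and is essentially the same one the paper intends: the paper defers its proof to the online version, but the theorem's own notation---the $\varphi$-block in positions $1,\dots,N$ of $\boldsymbol{Q}$, the $x$-block in positions $N+1,\dots,NM+N$, the single quadratic form $\mathbf{P}_0$ carrying the bilinear coupling $\sum_{n,m}G_{n,m}x_{n,m}\varphi_n$, the two linear forms $\mathbf{W}_0,\mathbf{W}_1$ for the $A_n\varphi_n$ and $B_{n,m}x_{n,m}$ terms, and the split of the delay constraint into $\mathbf{P}^{(\text{T1})},\mathbf{P}^{(\text{T2})},\mathbf{P}^{(\text{T3})}$---matches your lifted-vector construction exactly. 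Your key bookkeeping step, expanding $(1-\varphi_n)x_{n,m}=x_{n,m}-x_{n,m}\varphi_n$ in $T^{(t_3)}_{n,m}$ and $T^{(t_4)}_{n,m}$ and routing the bilinear coefficients into the symmetric quadratic matrix while the residual linear pieces go to the two linear forms, is precisely what the reduction requires.
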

\begin{proof}
See Section \uppercase\expandafter{\romannumeral3} Part A in the online version \cite{qian2023user}.
\end{proof}
Problem (\ref{prob6}) is the standard QCQP form. However, it is still non-convex. Then, we need to utilize the semidefinite programming (SDP) method to transform this QCQP problem into a semidefinite relaxation (SDR) problem. 
\begin{theorem}
There're matrices $\mathbf{S}$, $\mathbf{P}_1$, $\mathbf{P}_2$, $\mathbf{P}_3$, $\mathbf{P}_4$, $\mathbf{P}_5$, $\mathbf{P}_6$, $\mathbf{P}_7$, and $\mathbf{P}_8$ that transform Problem $\mathcal{P}_2$ into SDR Problem $\mathcal{P}_3$:
\begin{subequations}\label{prob7}
\begin{align}
\mathcal{P}_{3}: \min\limits_{\bm{S},T}\quad & \text{Tr}(\mathbf{P}_1 \mathbf{S})\tag{\ref{prob7}}\\
\text{s.t.} \quad & \text{Tr}(\mathbf{P}_2 \mathbf{S})=0 \label{x_constr1_sdr}\\       & \text{Tr}(\mathbf{P}_3 \mathbf{S})=0 \label{x_constr2_sdr}\\
         & \text{Tr}(\mathbf{P}_4 \mathbf{S})\leq0 \label{varphi_constr_sdr}\\
         & \text{Tr}(\mathbf{P}_5 \mathbf{S})\leq0 \label{x_b_constr_sdr}\\
         & \text{Tr}(\mathbf{P}_6 \mathbf{S})\leq0 \label{x_ps_constr_sdr}\\
         & \text{Tr}(\mathbf{P}_7 \mathbf{S})\leq0 \label{x_fs_constr_sdr}\\
         & \text{Tr}(\mathbf{P}_8 \mathbf{S})\leq T \label{T_constr_sdr}\\
         & \mathbf{S}\succeq0, \label{S_constr_sdr}
\end{align}
\end{subequations}
\end{theorem}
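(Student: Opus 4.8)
The plan is to apply the standard semidefinite-relaxation lifting to the QCQP $\mathcal{P}_2$. Since every expression in the objective and constraints of $\mathcal{P}_2$ is either quadratic or affine in the decision vector, the first step I would take is \emph{homogenization}: append a constant coordinate equal to $1$ (and, to absorb the linear delay term $y\omega_t T$, also carry $T$) to $\boldsymbol{Q}$, forming an augmented vector $\widetilde{\boldsymbol{Q}} = [\boldsymbol{Q}^\intercal, T, 1]^\intercal$. With this device every affine expression $\mathbf{W}^\intercal \boldsymbol{Q} + c$ becomes a homogeneous quadratic form $\widetilde{\boldsymbol{Q}}^\intercal \mathbf{M} \widetilde{\boldsymbol{Q}}$, obtained by placing the linear coefficients (scaled by $\tfrac12$) in the row and column indexed by the homogenizing coordinate. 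In particular the objective $\boldsymbol{Q}^\intercal \mathbf{P}_0 \boldsymbol{Q} + (\mathbf{W}_0+\mathbf{W}_1)^\intercal \boldsymbol{Q} + y\omega_t T$ collapses to a single quadratic form $\widetilde{\boldsymbol{Q}}^\intercal \mathbf{M}_1 \widetilde{\boldsymbol{Q}}$.

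Next I would \emph{lift} the vector variable to a matrix variable by defining $\mathbf{S} := \widetilde{\boldsymbol{Q}}\,\widetilde{\boldsymbol{Q}}^\intercal$. The key identity is the cyclic property of the trace, namely $\widetilde{\boldsymbol{Q}}^\intercal \mathbf{M} \widetilde{\boldsymbol{Q}} = \mathrm{Tr}(\mathbf{M}\,\widetilde{\boldsymbol{Q}}\widetilde{\boldsymbol{Q}}^\intercal) = \mathrm{Tr}(\mathbf{M}\mathbf{S})$, which renders every quadratic form a \emph{linear} function of $\mathbf{S}$. Setting $\mathbf{P}_1 = \mathbf{M}_1$ yields the objective $\mathrm{Tr}(\mathbf{P}_1 \mathbf{S})$, and each constraint of $\mathcal{P}_2$ is then matched to one matrix $\mathbf{P}_i$: the binary condition (\ref{x_constr1_qcqp}) gives $\mathrm{Tr}(\mathbf{P}_2\mathbf{S})=0$ (exploiting that the diagonal entry for $x_{n,m}$ equals $x_{n,m}^2$ while its cross-entry with the homogenizing coordinate equals $x_{n,m}$, so $x_{n,m}^2-x_{n,m}=0$ is linear in $\mathbf{S}$); the single-association equality (\ref{x_constr2_qcqp}) gives $\mathrm{Tr}(\mathbf{P}_3\mathbf{S})=0$; the box constraint $\boldsymbol{\varphi}\in[0,1]^N$, which I would encode as the single quadratic inequality $\varphi_n(\varphi_n-1)\le 0$, gives $\mathrm{Tr}(\mathbf{P}_4\mathbf{S})\le 0$ and thereby fuses the two one-sided bounds (\ref{varphi_constr1_qcqp})--(\ref{varphi_constr2_qcqp}); the bandwidth, power, and frequency budgets (\ref{x_b_constr_qcqp})--(\ref{x_fs_constr_qcqp}) give $\mathrm{Tr}(\mathbf{P}_5\mathbf{S}),\mathrm{Tr}(\mathbf{P}_6\mathbf{S}),\mathrm{Tr}(\mathbf{P}_7\mathbf{S})\le 0$; and the delay bound (\ref{T_constr_qcqp}) gives $\mathrm{Tr}(\mathbf{P}_8\mathbf{S})\le T$. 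The selection operators $\boldsymbol{e}_{\cdot,\cdot}$ and the $\mathrm{diag}(\cdot)$ maps of $\mathcal{P}_2$ reappear only as the explicit recipes for populating the nonzero blocks of each $\mathbf{P}_i$.

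The final step is the \emph{relaxation} itself. The lifting constraint $\mathbf{S} = \widetilde{\boldsymbol{Q}}\widetilde{\boldsymbol{Q}}^\intercal$ is equivalent to the three conditions $\mathbf{S}\succeq 0$, $\mathrm{rank}(\mathbf{S})=1$, and the bottom-right entry of $\mathbf{S}$ equal to $1$ (the normalization of the homogenizing coordinate, which I would encode as an extra linear equality or fold into $\mathbf{P}_3$). Dropping only the nonconvex requirement $\mathrm{rank}(\mathbf{S})=1$ while retaining $\mathbf{S}\succeq 0$ produces exactly $\mathcal{P}_3$; since the objective and all constraints are now affine in $\mathbf{S}$ and the PSD cone is convex, $\mathcal{P}_3$ is a convex SDP whose optimum lower-bounds that of $\mathcal{P}_2$.

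I expect the main obstacle to be the bookkeeping in the second step rather than any conceptual difficulty: I must construct each $\mathbf{P}_i$ so that the symmetrized placement of linear coefficients in the homogenizing row and column reproduces the intended affine term, and verify that the two distinct selection patterns $\boldsymbol{e}_{1,N}$ (the $\boldsymbol{\varphi}$ block) and $\boldsymbol{e}_{N+1,NM+N}$ (the $\boldsymbol{x}$ block) are routed to the correct entries of $\mathbf{S}$. Care is also needed so that the handling of $T$ and the homogenizing coordinate stays mutually consistent, i.e. that the $y\omega_t T$ contribution absorbed into $\mathbf{P}_1$ is compatible with the $T$ that still appears on the right-hand side of the delay constraint $\mathrm{Tr}(\mathbf{P}_8\mathbf{S})\le T$.
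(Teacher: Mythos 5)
Your proposal is correct and follows essentially the same route as the paper (which defers the details to its online version): homogenize $\boldsymbol{Q}$ with the auxiliary coordinates, lift to $\mathbf{S}=\widetilde{\boldsymbol{Q}}\widetilde{\boldsymbol{Q}}^\intercal$ so that every quadratic or affine expression becomes a trace linear in $\mathbf{S}$, and drop only the nonconvex $\mathrm{rank}(\mathbf{S})=1$ condition while retaining $\mathbf{S}\succeq 0$ --- exactly the SDP-lifting-plus-rank-relaxation the paper describes around the theorem. The only cosmetic difference is that you fuse the two one-sided bounds on $\bm{\varphi}$ into the single quadratic $\varphi_n(\varphi_n-1)\leq 0$, whereas the paper keeps them as separate constraints; both encodings yield valid relaxations, and your explicit flagging of the normalization entry of $\mathbf{S}$ and of the consistency between the lifted $T$ and the $T$ on the right-hand side of \eqref{T_constr_sdr} addresses bookkeeping the paper leaves implicit.
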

\begin{proof}
    See Section \uppercase\expandafter{\romannumeral3} Part A in the online version \cite{qian2023user}.
\end{proof}
The constraints $(\text{\ref{x_constr1_qcqp}})$, $(\text{\ref{x_constr2_qcqp}})$, $(\text{\ref{varphi_constr1_qcqp}})$, $(\text{\ref{varphi_constr2_qcqp}})$, $(\text{\ref{x_b_constr_qcqp}})$, $(\text{\ref{x_ps_constr_qcqp}})$, $(\text{\ref{x_fs_constr_qcqp}})$, $(\text{\ref{T_constr_qcqp}})$ in $\mathcal{P}_{2}$ are transformed into the constraints $(\text{\ref{x_constr1_sdr}}),(\text{\ref{x_constr2_sdr}}), (\text{\ref{varphi_constr_sdr}}),(\text{\ref{S_constr_sdr}}), (\text{\ref{x_b_constr_sdr}}),(\text{\ref{x_ps_constr_sdr}}),(\text{\ref{x_fs_constr_sdr}}),(\text{\ref{T_constr_sdr}})$ in $\mathcal{P}_{3}$, respectively. Drop the constraint $\text{rank}(\mathbf{S})=1$ and the objective function and the constraints are all convex. Then this SDR problem will be solved in polynomial time by common convex solvers. By solving this SDR problem, we can get a continuous solution of $\vect{Q}$. However, this solution is the lower bound of the optimal solution and it may not guarantee the constraint $\text{rank}(\mathbf{S})=1$. Therefore, we need to use rounding techniques to recover the solution. The latter $NM$ elements in $\vect{Q}$ is $x_{n,m}$, for all $n \in \mathcal{N}, m \in \mathcal{M}$, which means that user $n$ is fractional connected to server $m$. Then, find all user $n$ that $\sum_{m \in \mathcal{M}} x_{n,m} > 1$. For these users, modify $x_{n,m}$ as $\frac{x_{n,m}}{|\sum_{m \in \mathcal{M}} x_{n,m}|}$. Use the Hungarian algorithm \cite{dai2018joint} with augmented zero vectors to find the best matching with the maximum weight and denote this matching as a set $\mathcal{X}_{matching}$. For nodes $n$ and $m$ in $\mathcal{X}_{matching}$, set $x_{n,m} = 1$, otherwise $x_{n,m} = 0$. Representing this as $\vect{x}_{\sharp}$, substitute it into Problem (\ref{prob6}) to derive optimal $\vect{\varphi}$. 

\subsection{AO Part 2: Optimize \texorpdfstring{$\vect{b},\vect{p_u},\vect{p_s},\vect{f_u},\vect{f_s},T$}{}, given \texorpdfstring{$\vect{x}, \vect{\varphi}$}{}}\label{section-AO-part2}
We first define $\chi_1(p_n):=p_n x_{n,m}\varphi_n d_n\omega_b$, $\chi_2(p_m):=p_m x_{n,m} (1-\varphi_n) d_n\omega_b$. Let
\begin{talign}
    &\mathcal{F}(\bm{b},\bm{p_s},\bm{f_u},\bm{f_s},T):=\!\!\sum_{n \in \mathcal{N}, m \in \mathcal{M}}x_{n,m} v_{n,m} -y \omega_t T \nonumber \\& -y\omega_e \big\{\sum_{n \in \mathcal{N}} e_n \kappa_n \varphi_n d_n t_n F^{2}_n \nonumber\\& + \sum_{n\in \mathcal{N}, m\in \mathcal{M}} \!e_m \kappa_m x_{n,m} (1-\varphi_n) d_n t_n F_{n,m}^2\big\},
\end{talign}
and $\mathcal{F}$ is in short for $\mathcal{F}(\bm{b},\bm{p_s},\bm{f_u},\bm{f_s},T)$. It's easy to justify that $\mathcal{F}$ is concave. Given $\vect{x}$ and $\vect{\varphi}$, the remaining optimization problem is shown as (\ref{prob8}):
\begin{align}
&\max\limits_{\bm{b},\bm{p_u},\bm{p_s},\bm{f_u},\bm{f_s},T}\quad \mathcal{F}-y \omega_e \sum_{n \in \mathcal{N}, m \in \mathcal{M}}(\frac{\chi_1(p_n)}{r_{n,m}}+\! \frac{\chi_2(p_m)}{r_{m,n}})\label{prob8}\\
&\text{s.t.} \quad 
         (\text{\ref{x_b_constr}}), (\text{\ref{pu_constr}}), (\text{\ref{x_ps_constr}}), (\text{\ref{fu_constr}}), (\text{\ref{x_fs_constr}}), (\text{\ref{nonnegative_constr}}), (\text{\ref{T_constr}}),\nonumber
\end{align}
where $\frac{\chi_1(p_n)}{r_{n,m}} + \frac{\chi_2(p_m)}{r_{m,n}}$ is non-convex or concave. Then, according to the fractional programming technique introduced in Section \uppercase\expandafter{\romannumeral4} in \cite{zhao2023human}, we let $z_{1,n,m}=\frac{1}{2 \chi_1(p_n) r_{n,m}}$ and $z_{2,n,m}=\frac{1}{2 \chi_2(p_m) r_{m,n}}$. The optimization objective function can be expressed as
\begin{talign}\label{fp_trans_prob}
&\mathcal{F}-y \omega_e \sum_{n \in \mathcal{N}, m \in \mathcal{M}}[\chi_1(p_n)^2z_{1,n,m}+\frac{1}{4 r_{n,m}^2 z_{1,n,m}} \!\!+ \!\!\chi_2(p_m)^2 \nonumber\\ & \cdot z_{2,n,m} + \frac{1}{4 r_{m,n}^2 z_{2,n,m}}].
\end{talign}
The transformation optimization problem is shown as (\ref{prob9}):
\begin{align}
&\max\limits_{\bm{b},\bm{p_u},\bm{p_s},\bm{f_u},\bm{f_s},T}\mathcal{F}-y \omega_e \sum_{n \in \mathcal{N}, m \in \mathcal{M}}[\chi_1(p_n)^2z_{1,n,m}\nonumber \\&+\frac{1}{4 r_{n,m}^2 z_{1,n,m}} +\chi_2(p_m)^2 z_{2,n,m}  + \frac{1}{4 r_{m,n}^2 z_{2,n,m}}]\label{prob9}\\
& \quad \text{s.t.}
         (\text{\ref{x_b_constr}}), (\text{\ref{pu_constr}}), (\text{\ref{x_ps_constr}}), (\text{\ref{fu_constr}}), (\text{\ref{x_fs_constr}}), (\text{\ref{nonnegative_constr}}), (\text{\ref{T_constr}}).\nonumber
\end{align}
If $\vect{z}_1 = (z_{1,1,1},\cdots,z_{1,n,m})^\intercal$ and $\vect{z}_2 = (z_{2,1,1},\cdots,z_{2,n,m})^\intercal$ is given, the objective function (\ref{prob9}) is concave. At the $i$-th iteration, $[y^{(i)}$, $\vect{z}_1^{(i)}$, $\vect{z}_2^{(i)}]$ are first calculated with the solution $[\vect{b}^{(i-1)},\vect{p_u}^{(i-1)},\vect{p_s}^{(i-1)},\vect{f_u}^{(i-1)},\vect{f_s}^{(i-1)}]$. Then, $[\vect{b}^{(i)},\vect{p_u}^{(i)},\vect{p_s}^{(i)},\vect{f_u}^{(i)},\vect{f_s}^{(i)}]$ can be obtained by solving the concave problem (\ref{prob9}) with $[y^{(i)}, \vect{z}_1^{(i)}, \vect{z}_2^{(i)}]$. Thus, the optimization problem is concave and can be solved by convex tools. The DASHF algorithm process is detailed in \mbox{Algorithm \ref{algo:AO}}.
\begin{algorithm}
\caption{DASHF Algorithm.}
\label{algo:AO}

Initialize $i \leftarrow -1$ and for all $n \in \mathcal{N}, m \in \mathcal{M}$: $b_{n,m}^{(0)} = \frac{b_{\max}}{N}$, $p_n^{(0)}=p_n$, $p_m^{(0)} = \frac{p_{\max}^{(m)}}{N}$, $F_n^{(0)} = F_{max}^{(n)}$, $F_{n,m}^{(0)} = \frac{F_{n,m}^{(m)}}{N}$, $\vect{x}^{(0)} = (\vect{e_1},\cdots,\vect{e_M})^\intercal$, $\varphi_n^{(0)}=0.5$;

$y^{(0)} \leftarrow \sum\limits_{n \in \mathcal{N}, m \in \mathcal{M}} \frac{x_{n,m}^{(0)} v_{n,m}}{\omega_t T^{(0)} + \omega_e (E_{u,n,m}^{(0)} + E_{s,n,m}^{(0)})}$;

\Repeat{$\frac{y^{(i+1)}}{y^{(i)}}- 1 \leq \epsilon$}{

Let $i \leftarrow i+1$;

Obtain $[\vect{x}^{(i+1)},\vect{\varphi}^{(i+1)}]$ by AO part 1;

Obtain $[\boldsymbol{b}^{(i+1)},\boldsymbol{p_u}^{(i+1)},\boldsymbol{p_s}^{(i+1)},\boldsymbol{f_u}^{(i+1)},\boldsymbol{f_s}^{(i+1)}]$ by AO part 2;

Update $y^{(i+1)} \leftarrow \sum\limits_{n \in \mathcal{N}, m \in \mathcal{M}} \frac{x_{n,m}^{(i+1)} v_{n,m}}{\omega_t T^{(i+1)} + \omega_e (E_{u,n,m}^{(i+1)} + E_{s,n,m}^{(i+1)})}$;
}

Set$[\vect{x}^{(i+1)},\vect{\varphi}^{(i+1)},\boldsymbol{b}^{(i+1)},\boldsymbol{p_u}^{(i+1)},\boldsymbol{p_s}^{(i+1)},\boldsymbol{f_u}^{(i+1)},\boldsymbol{f_s}^{(i+1)}]$ as a solution to Problem (\ref{prob1});

Return $[\vect{x}^{(i+1)},\vect{\varphi}^{(i+1)},\boldsymbol{b}^{(i+1)},\boldsymbol{p_u}^{(i+1)},\boldsymbol{p_s}^{(i+1)},\boldsymbol{f_u}^{(i+1)},\boldsymbol{f_s}^{(i+1)}]$ as a solution $[\vect{x}^{\star},\vect{\varphi}^{\star},\boldsymbol{b}^{\star},\boldsymbol{p_u}^{\star},\boldsymbol{p_s}^{\star},\boldsymbol{f_u}^{\star},\boldsymbol{f_s}^{\star}]$ to Problem~(\ref{prob1}).
\end{algorithm}

\begin{figure*}[t]
\vspace{-0.3cm}
\subfigure[Convergence of Algorithm AO-Part 1]{\includegraphics[width=.33\textwidth]{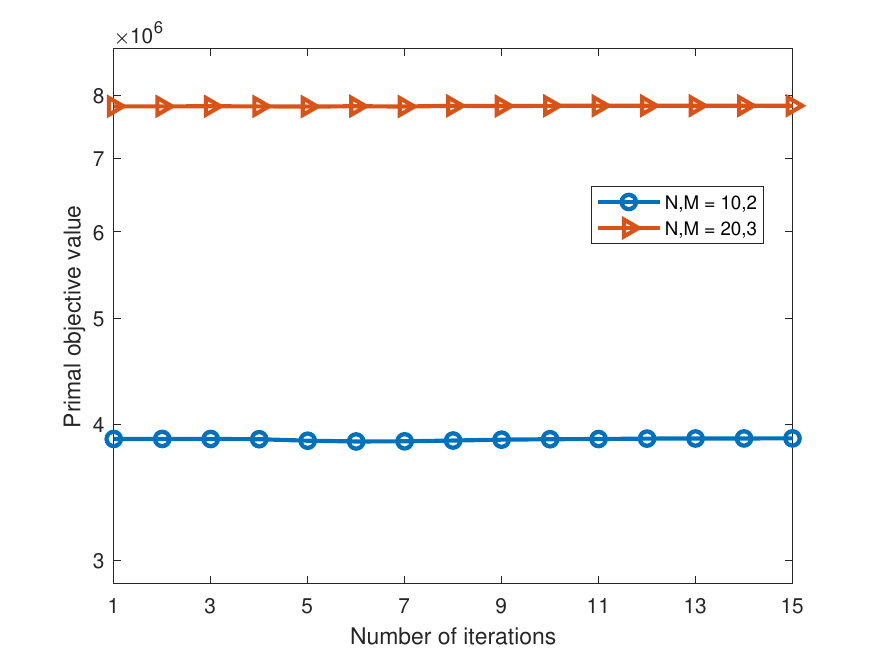}\label{fig:Convergence_of_AO-Part1}}
\subfigure[Convergence of Algorithm AO-Part 2]{\includegraphics[width=.33\textwidth]{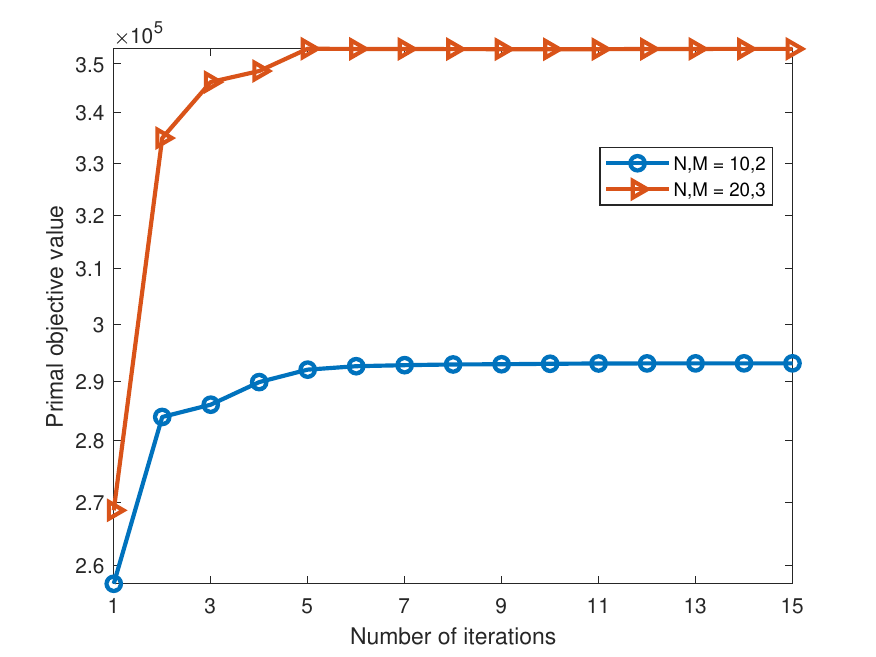}\label{fig:Convergence_of_AO-Part2}}
\subfigure[Convergence of Algorithm DASHF]{\includegraphics[width=.33\textwidth]{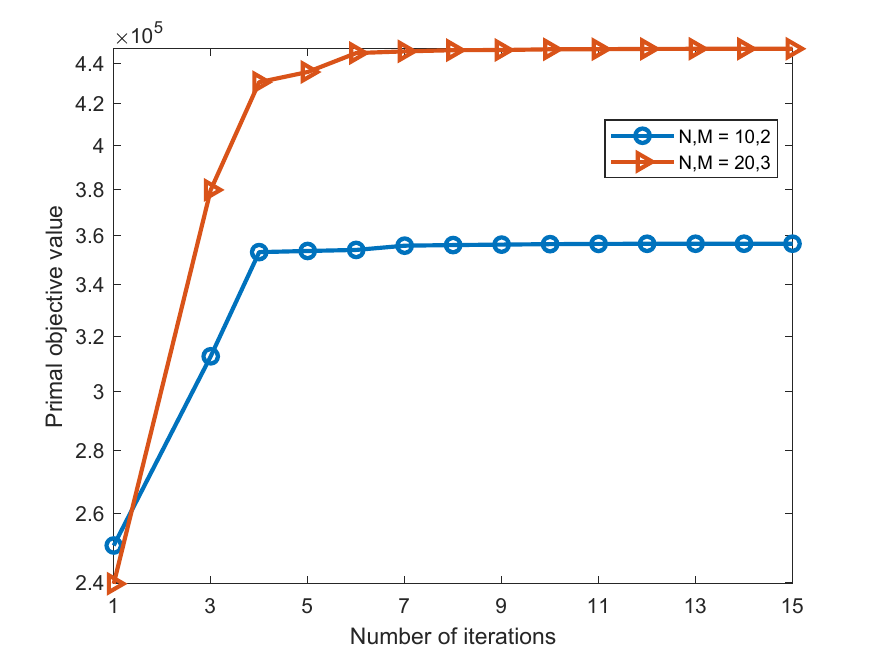}\label{fig:Convergence_of_AO}} \vspace{-10pt}
\caption{Convergence of the proposed Algorithms} 
\end{figure*}

\begin{figure*}[t]
\vspace{-0.3cm}
\subfigure[Resource consumption and SCR of baselines and proposed DASHF Algorithm]{\includegraphics[width=.33\textwidth]{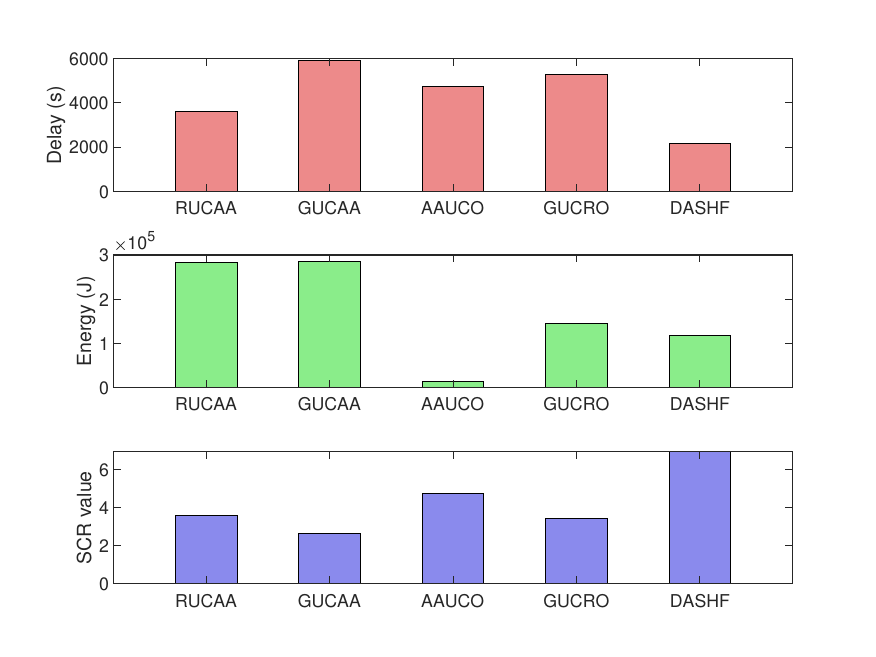}\label{Fig.AO performance}}
\subfigure[SCR under different total bandwidth]{\includegraphics[width=.33\textwidth]{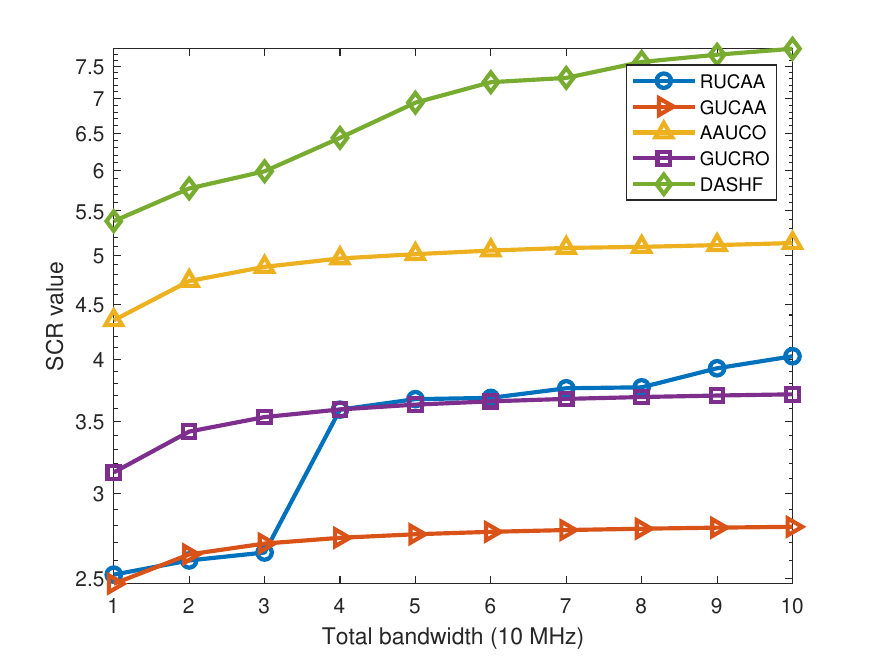}\label{Fig.Comparison_of_b_max}}
\subfigure[Resources consumption and SCR under different $(\omega_{t},\omega_{e})$]{\includegraphics[width=.33\textwidth]{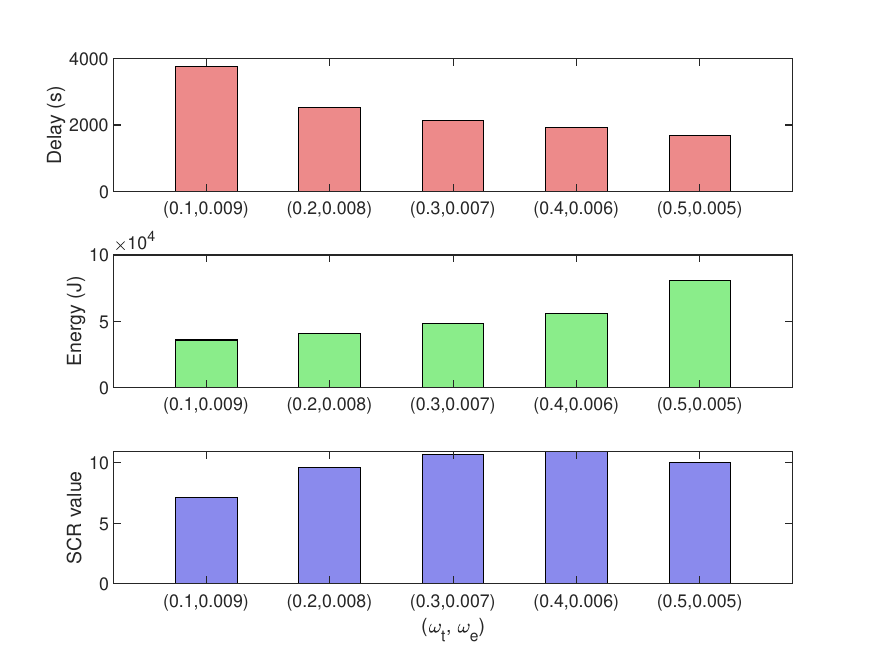}\label{Fig.Comparison_of_omega}}\vspace{-10pt}
\caption{SCR comparisons under different cases} \vspace{-15pt}
\end{figure*}
\section{Complexity Analysis of the DASHF Algorithm}\label{section.Complexity Analysis}
In \ref{section-AO-part1}, at each iteration, there are $a_1 = N(M+1)+1$ variables and $b_1 = N(2M+2)+3M$ constraints. The complexity of the Hungarian algorithm is $\mathcal{O}(N^3)$. The worst-case complexity of \ref{section-AO-part1} is $\mathcal{O}(N^3+(a_1^2b_1+a_1^3)b_1^{0.5}\log(\frac{1}{\epsilon_1}))$ with a given solution accuracy $\epsilon_1 > 0$ \cite{dai2018joint}. In \ref{section-AO-part2}, at each iteration, there are $a_2 = N(3M+2)+1$ variables and $b_2 = N(M+2)+3M$ constraints. The worst-case complexity of \ref{section-AO-part2} is $\mathcal{O}(a_2^2b_2+a_2^3)b_2^{0.5}\log(\frac{1}{\epsilon_2})$ with a given solution accuracy $\epsilon_2 > 0$. To summarize, if the whole algorithm takes $\mathcal{I}$ iterations, the whole complexity is $\mathcal{I}\mathcal{O}(N^3+(a_1^2b_1+a_1^3)b_1^{0.5}\log(\frac{1}{\epsilon_1}) + (a_2^2b_2+a_2^3)b_2^{0.5}\log(\frac{1}{\epsilon_2}))$.
\section{Simulation Results}\label{section.Numerical Results}
This section outlines the default simulation settings, verifies the DASHF algorithm's convergence, compares it with baselines for validation, and examines the effects of varying communication, computational resources, and cost weights on the SCR.
\subsection{Default settings}
We consider a network topology of 1000 m $\times$ 1000 m with 10 mobile users and 2 servers. The large-scale fading $h_{n,m}$ between the user $n$ and server $m$ is modeled as $128.1+37.6 \log_{10}d_{n,m}$, where $d_{n,m}$ denotes the  Euclidean distance between the user $n$ and server $m$. The small-scale fading is the Rayleigh fading. Gaussian noise power $\sigma^2$ is $-134$dBm. The total bandwidth for each server $b_{max}$ is 10 MHz. The maximum transmit power of mobile users $p_{max}^{(n)}$ is 0.2 W. The maximum transmit power of servers $p_{max}^{(m)}$ is 10 W. We assume the GPU resource utilization is $0.55$ for users and servers. The maximum GPU computation speed of mobile users $F_{max}^{(n)}$ is $19.58$ TFLOPs with four GTX 1080 GPUs and that of servers $F_{max}^{(m)}$ is $1372.8$ TFLOPs with eight A100 GPUs. The effective switched capacitance of mobile users and servers ($\kappa_n$ and $\kappa_m$) is $10^{-38}$. We refer to the adapter parameter sizes in \cite{zhang2023llama} and \cite{gao2023llama}. The adapter parameter sizes of mobile users are randomly selected from $[1.2, 14]$ M. To achieve this, pseudorandom values are generated, which follow a standard uniform distribution over the open interval $(0,1)$. These pseudorandom values are then scaled to the range of $[1.2, 14]$ M to determine the specific adapter parameter sizes for each mobile user. The token data sizes of users are randomly selected from $[10, 50]$ M bits. The parameters of delay and energy consumption ($\omega_t$ and $\omega_e$) are 0.5 and 0.005 to keep them in the same order. We set $\varpi_1$ and $\varpi_2$ as $\frac{10000}{\ln2}$ and $\frac{1}{3}$ to keep SCR big enough, respectively. we consider the ``float32'' method to represent the floating-point number and  $\omega_b$ is 32. User and server training epochs $e_n$ and $e_m$ are both one. The Mosek tool in Matlab is used to conduct the simulations.




\subsection{Convergence of proposed Algorithms}
In this section, the convergence of our algorithms is evaluated using two network topologies: (10 users, 2 servers) and (20 users, 3 servers), with other settings at default. Convergence is assessed based on the primal objective value, which estimates the optimization problem's primal objective when solved using Mosek. Figures \ref{fig:Convergence_of_AO-Part1}, \ref{fig:Convergence_of_AO-Part2}, and \ref{fig:Convergence_of_AO} demonstrate the convergence of Algorithms AO-Part 1, AO-Part 2, and the DASHF Algorithm within 15, 9, and 9 iterations respectively, proving the DASHF Algorithm's effectiveness in finding a stationary point for Problem (\ref{prob1}).
\subsection{Comparison with baselines}
In this section, we mainly consider four baselines to carry out the comparison experiments.
\begin{enumerate}
    \item \textbf{\underline{R}andom \underline{u}ser \underline{c}onnection with \underline{a}verage resource \underline{a}llocation (RUCAA)}. In the algorithm, each user is randomly assigned a server, which then evenly distributes communication and computational resources among its connected users.
    \item \textbf{\underline{G}reedy \underline{u}ser \underline{c}onnection with \underline{a}verage resource \underline{a}llocation (GUCAA)}. In this algorithm, each user chooses the server with the fewest underserved users, and the server equally allocates communication and computational resources to its connected users.
    \item \textbf{\underline{A}verage resource \underline{a}llocation with \underline{u}ser \underline{c}onnection \underline{o}ptimization (AAUCO)}. This algorithm assigns equal communication and computation resources from each Metaverse server to its connected users, utilizing \textbf{Algorithm \ref{section-AO-part1}} for user connection optimization.
    \item \textbf{\underline{G}reedy \underline{u}ser \underline{c}onnection with \underline{r}esource allocation \underline{o}ptimization (GUCRO)}. Each user chooses the Metaverse server with the fewest underserved users and \textbf{Algorithm \ref{section-AO-part2}} is used for resource optimization.
    \item \textbf{Proposed DASHF algorithm}. Joint optimization of user connection and resource allocation by utilizing the whole proposed DASHF algorothm.
\end{enumerate}
In Fig. \ref{Fig.AO performance}, the DASHF algorithm is compared against various baselines for resource consumption and SCR. RUCAA and GUCAA underperform due to lack of optimization, while GUCRO and AAUCO show improved results, validating the effectiveness of Algorithm AO Part 1 and 2. Notably, AAUCO's SCR outperforms GUCRO, indicating superior efficacy of user connection optimization over resource optimization in this scenario. The DASHF algorithm excels in time consumption, maintains low energy use (slightly more than AAUCO), and achieves the highest SCR, benefiting from the combined optimization of user connection and resource allocation.
\subsection{SCR versus the total bandwidth}
We consider the total bandwidth from 10 MHz to 100 MHz to test the SCR under different total bandwidths. Other parameters are fixed as default settings. Fig. \ref{Fig.Comparison_of_b_max} reveals distinct algorithmic performance trends, with the proposed DASHF method consistently outperforming GUCRO, AAUCO, RUCAA, and GUCAA in terms of the SCR. Notably, optimization algorithms (GUCRO and AAUCO) demonstrate superior or close performance compared to non-optimization algorithms (RUCAA and GUCAA). AAUCO employs user connection optimization strategies and performs better than RUCAA, GUCRO, and GUCAA. 
\subsection{Impact of cost weights on SCR}
Fig. \ref{Fig.Comparison_of_omega} displays how different ($\omega_t$, $\omega_e$) combinations affect the trade-off between delay-energy optimization and trust score. Shifting these values alters outcomes: prioritizing energy efficiency (e.g., ($\omega_t$, $\omega_e$) = (0.1, 0.009)) leads to lower energy but higher delay, yielding moderate SCR. In contrast, balanced settings (e.g., ($\omega_t$, $\omega_e$) = (0.5, 0.005)) achieve lower delay and slightly higher energy, resulting in high SCR. This underscores the need to finely tune these parameters to suit specific application requirements and manage the delay-energy versus service score balance.
\section{Conclusion}\label{section.Conclusion}
This study explores efficient LLM service delivery within wireless communication system constraints. We introduce a collaborative training approach between mobile users and servers, tailored to overcome resource limitations. Our strategy optimizes resource use and maintains robust LLM performance, assigning initial layer training to users and subsequent layers to servers. The efficiency and resource optimization are gauged by the SCR, with the DASHF algorithm being key to our approach. Ultimately, mobile edge computing is poised to enable broad, efficient access to advanced LLM services, harmonizing computational constraints with growing application needs.
\bibliographystyle{IEEEtran}
\bibliography{ref}


\end{document}